
\documentclass[letterpaper, 10 pt, conference]{ieeeconf}  

\IEEEoverridecommandlockouts                              
\overrideIEEEmargins
\usepackage{cite}
\usepackage{amsmath,amssymb,amsfonts,verbatim}
\usepackage{algorithmic}
\usepackage{graphicx}
\usepackage[utf8]{inputenc}
\usepackage{textcomp}
\usepackage{xcolor}
\usepackage{braket}
\usepackage{soul}
\usepackage{bbm}
\usepackage[colorinlistoftodos]{todonotes}
\newtheorem{theorem}{Theorem}

\DeclareMathOperator{\Tr}{Tr}
\usepackage{tikz}
\usetikzlibrary{decorations.pathmorphing} 
\usetikzlibrary{matrix} 
\usetikzlibrary{arrows} 
\usetikzlibrary{calc} 

\tikzstyle{block} = [draw,align=center,rectangle,thick,minimum height=4em,minimum width=4em,text width=3cm]
\tikzstyle{state} = [draw,align=center, circle,inner sep=0mm,minimum size=1.7cm,text width=1cm]
\tikzstyle{connector} = [->,thick]
\tikzstyle{line} = [thick]
\tikzstyle{guide} = []

\newcommand{\EX}{\mathbb{E}}
\newcommand{\PR}{\mathbb{P}}
\newcommand{\BN}{\mathbb{N}}
\newcommand{\CX}{\mathcal{X}}
\newcommand{\CA}{\mathcal{A}}
\newcommand{\CS}{\mathcal{S}}
\newcommand{\CH}{\mathcal{H}}

\newcommand{\CL}{\mathcal{L}}
\newcommand{\complex}{\mathbb{C}}
\newcommand{\ly}{Lyapunov }
\newcommand{\limset}{$D_{\infty}$}
\DeclareMathOperator{\vect}{vec}
\newcommand{\state}{s}
\newcommand{\belief}{\pi}
\newcommand{\id}{\mathbb{I}}
\newcommand{\diag}{\text{diag}}
\newcommand{\reals}{{\rm I\hspace{-.07cm}R}}
\def\BibTeX{{\rm B\kern-.05em{\sc i\kern-.025em b}\kern-.08em
    T\kern-.1667em\lower.7ex\hbox{E}\kern-.125emX}}


\title{\LARGE \bf
\ly based  Stochastic Stability of  Human-Machine Interaction: \\  A Quantum Decision System Approach
}


\author{{Luke Snow}, {Shashwat Jain},  {Vikram Krishnamurthy}
\thanks{This research was supported in part by the National Science Foundation grant CCF-2112457 and Army Research Office grant W911NF-19-1-0365}
\thanks{Luke Snow {\tt\small las474@cornell.edu}, Shashwat Jain {\tt\small sj474@cornell.edu}, Vikram Krishnamurthy {\tt\small vikramk@cornell.edu} are  with the School of Electrical and Computer Engineering, Cornell University, Ithaca, NY 14853, USA
        }%
}

\begin{document}

\maketitle

\begin{abstract}
In mathematical psychology,  decision makers are modeled using the Lindbladian equations from quantum mechanics to capture important human-centric features such as  order effects and violation of the sure thing principle. We consider   human-machine interaction involving a quantum decision maker (human)  and a controller (machine). Given a sequence of  human decisions over time,  how can the controller dynamically provide input messages to  adapt these decisions so as to  converge to a specific decision?  We show via  novel stochastic \ly arguments  how  the Lindbladian dynamics of the quantum decision maker can be controlled to converge to a specific decision asymptotically.  Our methodology yields a useful mathematical framework for human-sensor decision making. The stochastic \ly results are also of independent interest as they generalize recent results in the literature.

\end{abstract}

\section{Introduction}

Recent studies in mathematical psychology \cite{martinez2016quantum}, \cite{kvam2021temporal}, \cite{busemeyer2020application}, show that the Lindbladian equations from quantum mechanics facilitate modeling peculiar aspects  of human decision making. Such quantum decision models capture {\em order effects}
(humans  perceive $P(H|A\cap B) $ and $P(H|B \cap A)$ differently in decision making)
and violation of the {\em sure thing principle} (human perception of probabilities in decision making violates the total probability rule). Motivated by the design of human-machine interaction systems, this paper addresses the following question: {\em Given a sequence of human decisions over time, how can a controller (machine)  adapt the Lindbladian dynamics  (of the human decision maker) so as to converge to a specific decision?} To investigate this, we develop a novel generalization of recent results involving finite-step stochastic \ly functions. Thus at an abstract level, we study the stochastic stability of a switched controlled Lindbladian dynamic system where the switching occurs due to the interaction of the controller (machine)  and decision maker (human) at specific (possibly random) time instants. 

\subsection{Decision Making Context}
Figure~\ref{fig:Model} shows our schematic setup. The finite-valued random variable $\state\sim \belief_0(\cdot)$ 
denotes the underlying state of nature, where $\belief_0$ is a known probability mass function.
The input signals $y_k$ and $z_k$ are noisy observations of the state with conditional observation densities $p(y|\state)$ and $p(z|\state)$, respectively. The human's psychological state $\rho$ is represented as a density operator in Hilbert Space, which evolves via the Lindbladian equation parametrized by the observation $y_k$ and input $u_k$. The density operator $\rho$ encodes a probability distribution over actions $\{a_j\}_{j \in \{1,\dots,m\}}$, and at each time point an action is taken according to this distribution. The machine observes the actions and outputs a feedback control signal $u_k$ to the human.


\subsubsection*{Examples} Several
examples in robotics \cite{askarpour2019formal},  interactive marketing/advertising \cite{belanche2020consumer} and  recommender systems \cite{lu2012recommender} exploit models for human decision making.
One example is a machine assisted healthcare system for patients with dementia \cite{HOEY2010503}, in which the patient is assisted by a machine (smart watch) to wash his hands. The machine's sensor detects whether a certain set of sequential actions are followed by the patient, and then sends those results to a controller which gives an audio/video command to the patient. 
In this context the underlying states ($s$) are the tap water, soap dispenser and towel dispenser, which are partially observed by both sensor and the patient. The patient has a psychological state ($\rho$) and the resultant hand washing actions ($a_k$) are sensed by the sensor, then the controller gives the control input ($u_k$). In our work, we model the psychological state of the patient as a Lindbladian evolution as shown in Figure~\ref{fig:Model} since this accounts for a wider range of human behavior, such as irrational decisions which could be made by the dementia patient, than classical models.

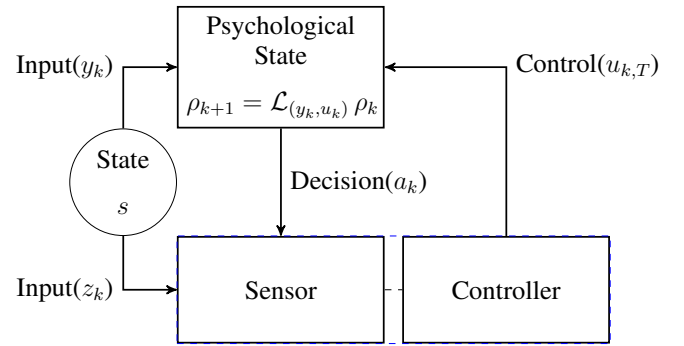
\begin{figure}[h!]
    \centering
    \resizebox{0.5\textwidth}{!}{%
  \begin{tikzpicture}[scale=1, auto, >=stealth']
     \large
    \matrix[ampersand replacement=\&, row sep=0mm, column sep=0cm] {
    
      \&\&\&\&\\
      \&\&\node[block] (F1) {Psychological State \[\rho_{k+1}=\mathcal{L}_{(y_k,u_k)}\,\rho_k\]};\&\node[connector] (u1) {};
      \&
      \&\\
      \node[state] (X){State\[s\]};\&\&\&\&\\
      \&\&\node[block] (F2) {Sensor};\&\&\node[block] (f1) {Controller};
      \\
    };
    \draw [dashed] (F2) -- (f1);
    \draw [connector] (F1) -- node {Decision($a_{k}$)} (F2);
    \draw [connector] (f1.north) -- ++(0,0cm) |- node [right]
                      {Control($u_{k,T}$)} ($(F1.east) + (0cm, 0em)$);
    \draw [connector] ($(X.north) - (0cm, 0cm)$) -- ++(0,0cm) |- node [left]
      {Input($y_k$)} ($(F1.west) - (0cm, 0em)$);
    \draw [connector] ($(X.south) - (0cm, 0em)$)-- ++(0,0cm) |- node [left]
      {Input($z_k$)} ($(F2.west) - (0cm, 0em)$); 
    \draw[dashed,blue] (F2.north west) rectangle (f1.south east);
  \end{tikzpicture} 
  }
    \caption{Human-Machine Interaction Model}
    \label{fig:Model}
\end{figure}
\subsection{Main Results and Organization}
    Given the described human-machine decision making system, the question we ask is: Can the human's decision preference be guided by the input control signals such that a desired target action is eventually taken at every time step? Our results reveal that this indeed is the case. We show this by developing a novel \ly stability result for a Lindbladian dynamic system.
    
    The main results and organization of the paper is as follows:\\
    (1) Sec. II introduces the open-quantum cognitive model of Martinez et. al. \cite{martinez2016quantum} and represents the discretized process in a form that will be mathematically useful for us.
    \\
    2) Sec. III presents Theorem (\ref{thm:convg}), which is our main result and shows the stochastic stability of our human-machine decision system. The proof uses the methodology of Amini et. al \cite{AMINI20132683}, along with Lyapunov techniques and Theorem (\ref{thm:extension}) which we provide in Sec. IV.\\
    3) Sec. IV provides a generalization of a finite-step \ly stability result given in Qin et. al. \cite{qin2019lyapunov} in Theorem (\ref{thm:rand}), to the case when the finite-step interval $T$ is a random variable. Also Theorem (\ref{thm:extension}) is a modified form of this result which is used to prove Theorem (\ref{thm:convg}) in Sec. III.

\subsection{Literature Review}
Generative models for human decision making are studied extensively in behavioral economics and psychology.
 The classical  formalisms  of human decision making are the Expected Utility models of Von-Neumann and Morgenstern (1953)\cite{morgenstern1953theory} and Savage (1954) \cite{savage1951theory}. Despite the successes of these models, numerous experimental findings, most notably those of Kahneman and Tverksy \cite{kahneman1982judgment}, have demonstrated violations of the proposed decision making axioms. There have since been subsequent efforts to develop axiomatic systems which encompass wider ranges of human behavior, such as the Prospect Theory \cite{kahneman2013prospect}. However, given the complexity of human psychology and behavior it is no surprise that current models still have points of failure. The theory of Quantum Decision Making (\cite{busemeyer2012quantum}, \cite{khrennikov2010ubiquitous}, \cite{yukalov2010mathematical} and references therein) has emerged as a new paradigm which is capable of generalizing current models and accounting for certain violations of axiomatic assumptions. For example, it has been empirically shown that humans routinely violate Savage's 'Sure Thing Principle' \cite{khrennikov2009quantum}, \cite{aerts2011quantum}, which is equivalent to violation of the law of total probability, and that human decision making is affected by the order of presentation of information \cite{trueblood2011quantum} \cite{busemeyer2011quantum} ("order effects"). These violations are natural motivators for treating the decision making agent's mental state as a quantum state in Hilbert Space; The mathematics of quantum probability was developed as an explanation of observed self-interfering and non-commutative behaviors of physical systems, directly analogous to the findings which Quantum Decision Theory (QDT) aims to treat. 
 
 {\em Remark}. QDT  models in  psychology do not claim that the brain is acting as a quantum device in any physical sense. Instead QDT  serves as a {\em parsimonious generative blackbox model} for human decision making that is backed up by extensive experimental studies \cite{kvam2021temporal}, \cite{busemeyer2012quantum}.

Within Quantum Decision Theory, several recent advances have utilized quantum dynamical systems to model time-evolving decision preferences. The classical model for this type of time-evolving mental state is a Markovian model, but in \cite{busemeyer2009empirical} an alternative formulation based on Schr\"{o}dinger's Equation is developed. This model is shown to both reconcile observed violations of the law of total probability via quantum interference effects and model choice-induced preference changes via quantum projection. This is further advanced in \cite{asano2012quantum}, and  \cite{martinez2016quantum} where the mental state is modeled as an open-quantum system. This open-quantum system representation allows for a generalization of the widely used Markovian model of preference evolution, while maintaining these advantages of the quantum framework. Busemeyer et. al. \cite{kvam2021temporal} provide empirical analysis which supports the use of open-quantum models and conclude "An open system model that incorporates elements of both classical and quantum dynamics provides the 
best available single system account of these three characteristics—evolution, oscillation, and choice-induced 
preference change".

\subsection*{Notation}
\begin{enumerate}
    \item $\ket{n} \in \mathcal{S}$ is $n^{th}$ basis vector in a Hilbert space $\mathcal{S}$.
    \item $\mathcal{H}(\{\ket{\state_1}, \dots, \ket{\state_k}\})$: Hilbert Space spanned by the orthonormal basis vector set $\{\ket {x_1},\dots,\ket {x_k}\}$ 
    \item $A^{\dagger}$: adjoint of $A$
    \item $\ket{n}\bra{m}$: outer product of $\ket{n}$ and $\ket{m}$
    \item Density operator $\rho$: $\rho=\ket{\Psi}\bra{\Psi}$ for some $\ket{\Psi}\in \CS$
    \item Trace of operator $A$: $\Tr(A)=\sum_{l=1}^{N}\bra{n}A\ket{n}$ for basis $\ket{n}\in\CS$
    \item  Random events are defined on $(\Omega,\mathcal{F},\PR)$.
\end{enumerate}
\section{Quantum Probability Model for Human Decision Making}
\label{sec2}
This  section presents the open-quantum system model that we will  use to represent the decision preference evolution of the human decision maker.
We define the evolution of the density operator of the decision maker using the open-system Quantum Lindbladian  Equation as given in \cite{martinez2016quantum}.

\subsection{Lindbladian Dynamics}
Given a state of nature which is a random variable that can take on $n$ values, the human decision maker chooses one of $m$ possible actions. 
The quantum based model for human decision making is governed by the Lindbladian evolution of the psychological state. With $\mathcal{L}$ denoting the Lindblad operator, the Lindbladian ordinary differential  equation for the dynamics of the psychological state $\rho_t$ over time $t \in [0,\infty)$  is specified as  
\begin{equation}
\frac{d\rho_t}{dt} = \mathcal{L}_{(\alpha,\lambda,\phi)} \, \rho_t, \quad \rho_0  = \frac{1}{nm}\diag(1,\cdots,1)_{mn\times mn} \label{eq:lindbladian}  \end{equation} 
Here $(\alpha,\lambda,\phi)$ are free parameters which determine the quantum decision maker's behavior, as discussed in Appendix~\ref{constr_lind}. We  assume  that the machine has full knowledge of these behavioral parameters, as methods for estimating these via training are outside the scope of this paper.

In subsequent sections we will control the evolution of $\rho_t$ and formulate \ly stability conditions. We are interested in the general case when the machine can only observe the human's actions and output a control every $T$ time steps, where $T$ is a random variable. We thus define the evolution of the density operator in  (\ref{M_update}), from the controller's perspective, over $T$-step iterations.

The psychological model comprises the following: 
\begin{enumerate}
\item  A state of nature $s \in \{1,\ldots,n\} = \CX$, with probability mass function $\pi_0(s)$
\item  An action $a_k \in \{1,\ldots,m\} = \CA$ by the human at discrete time~$k$, for $k = 0,1,2,\ldots$. 
\item  Scalar control input $u_k \in [-\bar{u},\bar{u}],\, \bar{u} \in \reals_+$, from the machine. This controls the parameters of the Lindbladian operator in equation (\ref{M_update}), and models a recommendation signal (for example a posterior probability of the state of nature) by the machine to the human.
\item The discrete time evolution of the psychological state
\begin{equation}
\label{M_update}
    \rho_{k+T} = \mathbb{M}_{\mu_k, T}^{u_k}\,\rho_k = \frac{M_{\mu_k, T}^{u_k}\, \rho_k\, M_{\mu_k, T}^{u_k\dagger}}{\Tr(M_{\mu_k, T}^{u_k}\, \rho_k\, M_{\mu_k, T}^{u_k\dagger})}
\end{equation} 
where $k = 0,1,\ldots$ denotes discrete-time.
Recall the random variable $T: \Omega \rightarrow \mathbb{N}$ specifies the time intervals over which the machine interacts with the human. $T$ has a known probability mass function $\pi_T(\cdot)$.  $\mu_k$ is a $T$-length sequence $\{\mu_{k_i}\}_{i = 1}^T $of random actions $\mu_{k_i}$ taking values in $\mathcal{A}$. 
See Appendix~\ref{DT_lind} for the definition of $\mathbb{M}_{\mu_k, T}^{u_k}$.
\item The action probability distribution at  time $k$
\begin{equation}
\label{eq:ProbEvo}
 p(a_k = \mu_k)=\Tr(M_{\mu_k}^{u_k}\, \rho_k \, M_{\mu_k}^{u_k\dagger}), \mu_k \in \{1,\cdots,m\}   
\end{equation}
\end{enumerate}
See Appendix~\ref{constr_lind}  and~\ref{DT_lind} for further model details.

\subsection{Practicality in Modeling Human Decision Making}
The above Lindbladian model
captures important human decision features such as the sure-thing principle and order effects, which we now describe.
These features cannot  be explained by purely Markovian models without sacrificing their explanatory power.

\subsubsection*{The violation of the sure-thing principle}
The total probability law, also called the Sure Thing Principle (STP), is  
\[P(A)=P(B)\,P(A|B)+P(\bar{B})\,P(A|\bar B)\]
for events $A$ and $B$. Violation occurs when $=$ is replaced by either $<$ or 
$>$.
Suppose $P(A|B) = 0.6$,  $P(A|\bar{B})= 0.5$. Then if the probability the human decision maker chooses action $A$ is either greater than 0.6 or less than 0.5, then the STP has been violated. 

 \cite{martinez2016quantum}  shows that the  Lindbladian model accounts for violations of the STP. Pothos and Busemeyer \cite{pothos2009quantum} (see also \cite{khrennikov2009quantum}) review  empirical evidence for the violation of STP and show how quantum models can account for it by introducing quantum interference in the probability evolution. 

\subsubsection*{Order Effects}
It is well-established in psychology  \cite{shanteau1970additive}, \cite{furnham1986robustness}, \cite{walker1972order} that the order of presented information can affect the final judgement of a human \cite{hogarth1992order}. Order effects are not easily accounted for using classical set-theoretic probability axioms, since  $P(H | A \cap B) = P(H| B \cap A)$, i.e. the order of presentation of events $A$ and $B$ does not influence the final probability judgement of $H$. Alternative models of inference have been proposed, such as the averaging model \cite{shanteau1970additive} and the belief-adjustment model \cite{hogarth1992order}, but these are only heuristic ad-hoc models which lack axiomatic foundations. Quantum probability is a natural axiomatic framework which can account for these effects, see \cite{trueblood2011quantum}, \cite{busemeyer2012quantum}, \cite{khrennikov2010ubiquitous}, \cite{busemeyer2011quantum} and references therein. Order effects naturally arise from the non-commutative structure of quantum interactions.

\section{Machine Control of Human Decision Maker}
\label{sec3}
This section  exploits the \ly function formulated  in \cite{AMINI20132683} and a generalized finite-step convergence theorem, (that will be proved in Section~\ref{sec4}), to prove our main result, Theorem \ref{thm:convg}. This theorem states that regardless of the initial psychological state of the human, the machine is  able to control the preference in such a way that the target action is eventually chosen at every time step with probability one.

We first define some notation:
With $ d =nm$, let $D$ denote the space of non-negative Hermitian matrices with trace~1:
\begin{equation}
D := \{\rho \in \complex^{d \times d} : \rho = \rho^{\dagger}, \Tr(\rho) = 1, \rho \geq 0\}
\end{equation}
 Let $\{\ket{b_r}\}_{r=1}^d$ be a set of orthonormal vectors in $\complex^d$, where each $\ket{b_r}$ corresponds to a unique state-action pair. Let $\CS$ be the Hilbert space formed by taking these $\{\ket{b_r}\}_{r=1}^d$ as an orthonormal basis. We consider scalar control inputs $u_k \in \reals$ satisfying constraints given in Appendix \ref{ap:contr}. For our purposes it suffices that $u_k \in [-1,1]$, see \cite{AMINI20132683} for details. The term {\em 'Open-loop (super) martingale'} below denotes a (super) martingale when  the control input $u_k = 0$ for $k = 0,1,\ldots$. 
 
 The following is the main result:\\

\begin{theorem}
\label{thm:convg}
    Given the discrete time density operator evolution (\ref{M_update}) and any target state $\ket{\bar{b}_r}, r \in \{1,\dots,d\}$, there exists a control sequence $\{u_k\}_{k \in \mathbb{N}}$ generated by the machine such that the human  psychological state $\rho_k$ converges to $\ket{\bar{b}_r}\bra{\bar{b}_r}$ with probability one for any initial $\rho_0 \in D$.\\
\end{theorem}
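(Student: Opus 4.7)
The plan is to build a stochastic Lyapunov function $V:D\to\reals$ whose unique minimizer on $D$ is the target pure state $\ket{\bar{b}_r}\bra{\bar{b}_r}$, and then invoke the random-horizon finite-step result Theorem~\ref{thm:extension} to conclude almost sure convergence. Following Amini et al. \cite{AMINI20132683}, a natural candidate is $V(\rho)=1-\bra{\bar{b}_r}\rho\ket{\bar{b}_r}$, or, if strict concavity is needed to separate the target from the rest of $D$, $V(\rho)=\sqrt{1-\bra{\bar{b}_r}\rho\ket{\bar{b}_r}}$. Both are continuous, bounded on $D$, and vanish exactly at $\ket{\bar{b}_r}\bra{\bar{b}_r}$.

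First I would establish the open-loop (super)martingale property: when $u_k=0$, the completeness relation $\sum_{\mu}M_{\mu}^{0\dagger}M_{\mu}^{0}=\id$ combined with the Born probabilities (\ref{eq:ProbEvo}) yields the identity $\EX\bigl[\Tr(A\,\rho_{k+1})\mid\rho_k\bigr]=\Tr\bigl(\CE^{0\dagger}(A)\,\rho_k\bigr)$, where $\CE^{0}$ is the unconditional channel built from the $M_\mu^0$'s. Choosing $A=\ket{\bar{b}_r}\bra{\bar{b}_r}$ and verifying (from the explicit Lindbladian in Appendix~\ref{constr_lind}) that $\ket{\bar{b}_r}$ is a fixed point of $\CE^{0\dagger}$ gives $\EX[V(\rho_{k+1})\mid\rho_k]=V(\rho_k)$. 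Iterating over the random horizon $T$ and averaging against $\pi_T$ preserves this equality, so $\{V(\rho_{kT})\}$ is an open-loop martingale on the compact state space $D$.

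Next I would design the feedback law $u_k=f(\rho_k)\in[-1,1]$ by maximizing the expected one-step decrease of $V$, exactly as in \cite{AMINI20132683}. Taylor-expanding the $T$-step Kraus operator $M_{\mu,T}^{u}$ about $u=0$, the first-order term in $u$ produces a contribution to $\EX[V(\rho_{k+T})\mid\rho_k]$ that is linear in $u$ with a coefficient depending on $\rho_k$ and $\ket{\bar{b}_r}$; choosing $u_k$ to have the opposite sign (and saturating when beneficial) then yields a strict finite-step supermartingale inequality
\begin{equation}
\EX\bigl[V(\rho_{k+T})\mid\rho_k\bigr]\;\leq\;V(\rho_k)-\epsilon(\rho_k),
\end{equation}
where $\epsilon(\rho)>0$ whenever $\rho\neq\ket{\bar{b}_r}\bra{\bar{b}_r}$. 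Feeding this and the boundedness of $V$ into Theorem~\ref{thm:extension} delivers $V(\rho_{kT})\to 0$ a.s., and continuity of $V$ together with compactness of $D$ upgrades this to $\rho_k\to\ket{\bar{b}_r}\bra{\bar{b}_r}$ a.s., as required.

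The main obstacle I anticipate is obtaining the strict decrease $\epsilon(\rho)>0$ uniformly away from the target after averaging over the random horizon $T\sim\pi_T$: the first-order $u$-sensitivity of $M_{\mu,T}^{u}$ depends on $T$, and one must rule out cancellations when expectation is taken against $\pi_T$. This will likely require either a mild non-degeneracy assumption on $\pi_T$ (e.g.\ $\pi_T(1)>0$ so that at least the one-step sensitivity survives), or a direct argument using the Lindblad generator structure to show the averaged sensitivity is bounded away from zero on $\{\rho:V(\rho)\geq\delta\}$ for every $\delta>0$. A secondary issue is excluding spurious fixed points of the controlled channel other than $\ket{\bar{b}_r}\bra{\bar{b}_r}$; a LaSalle-type invariance argument on the limit support, using the non-commutativity of the Lindbladian parameters $(\alpha,\lambda,\phi)$, should rule these out and complete the proof.
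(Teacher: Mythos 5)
There is a genuine gap, and it sits exactly where you label a ``secondary issue.'' Your plan rests on the simple candidate $V(\rho)=1-\bra{\bar{b}_r}\rho\ket{\bar{b}_r}$ together with a first-order (in $u$) argument giving a strict decrease $\epsilon(\rho)>0$ for all $\rho\neq\ket{\bar{b}_r}\bra{\bar{b}_r}$. This cannot work. Under the control constraints of the model, the zero-input Kraus operators $M_\mu^0$ are all diagonal in the basis $\{\ket{b_r}\}$, so every basis state $\ket{b_s}\bra{b_s}$, $s\neq\bar{r}$, is a fixed point of the open-loop dynamics, and at such a state the first-order $u$-sensitivity of the target fidelity vanishes: $\frac{d}{du}\sum_\mu|\bra{\bar{b}_r}M_\mu^u\ket{b_s}|^2\big|_{u=0}=2\sum_\mu\mathrm{Re}\bigl(\overline{\bra{\bar{b}_r}M_\mu^0\ket{b_s}}\,\bra{\bar{b}_r}\partial_u M_\mu^u|_{u=0}\ket{b_s}\bigr)=0$ because $\bra{\bar{b}_r}M_\mu^0\ket{b_s}=0$. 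So the ``linear in $u$'' coefficient you rely on is identically zero precisely at the spurious equilibria, and no choice of sign or saturation of $u$ yields your claimed uniform strict supermartingale inequality. More fundamentally, $\bra{\bar{b}_r}\rho_k\ket{\bar{b}_r}$ is an open-loop \emph{martingale} for every basis vector, so with this $V$ the state collapses onto a random basis state with probability essentially given by the initial fidelity; in particular, from $\rho_0$ orthogonal to the target no probability-one statement can be extracted from this function, and a LaSalle-type invariance remark does not repair it. Your worry about averaging over $\pi_T$ is comparatively minor; the spurious fixed points are the heart of the problem.

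The paper's proof avoids this by using the Amini et al.\ Lyapunov function $V_{\epsilon}(\rho)=\sum_{r=1}^d\sigma_r\bra{b_r}\rho\ket{b_r}-\frac{\epsilon}{2}\sum_{r=1}^d\bra{b_r}\rho\ket{b_r}^2$, a strictly concave function of the whole family of open-loop martingales, with the weights $\sigma_r$ chosen (Lemma 2 of \cite{AMINI20132683}) so that $u\mapsto\EX[V_\epsilon(\rho_{k+T})\,|\,\rho_k=\ket{b_s}\bra{b_s},u]$ has a strict local \emph{maximum} at $u=0$ for every $s\neq\bar{r}$ and a strict local \emph{minimum} at $u=0$ for $s=\bar{r}$. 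This is what lets the feedback $u_k=\mathrm{argmin}_u\,\EX[V_\epsilon(\rho_{k+T})\,|\,\rho_k,u]$ strictly destabilize the spurious states, so that the Kushner limit set $\{\rho:\tilde{Q}(\rho)=0\}$ of the $T$-subsampled chain reduces to the target; note the paper never needs (and does not have) your uniform strict decrease away from the target --- it only needs $\tilde{Q}\leq 0$ plus the characterization of $\{\tilde{Q}=0\}$. The final ingredient, upgrading convergence of the $T$-step subsequence to the full sequence via Theorem~\ref{thm:extension} with random $T$, matches your intended use of that theorem. To fix your argument you would need to replace the single-fidelity $V$ by a function of all the populations $\bra{b_r}\rho\ket{b_r}$ with tunable weights (or otherwise prove directly that each spurious basis state can be made strictly repelling under the feedback), which is precisely the content of the construction the paper imports.
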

\begin{proof}
We will follow the  formulation developed in \cite{AMINI20132683}. First
with $\beta_k = \{\mu_k, T\}$, rewrite (\ref{M_update}) as 
\begin{equation}
\label{eq:dens_up}
    \rho_{k+T} = \mathbb{M}_{\beta_k}^{u_k}\rho_k = \frac{M_{\beta_k}^{u_k}\, \rho_k\, M_{\beta_k}^{u_k \dagger}}{\Tr(M_{\beta_k}^{u_k}\, \rho_k\, M_{\beta_k}^{u_k \dagger})}
\end{equation}
 We define the following \ly function, which forms a supermartingale under both open-loop (zero-input) and closed-loop (feedback control ($u_k$)) conditions for the process (\ref{eq:dens_up}):
\begin{equation}
\label{Ly_fcn}
    V_{\epsilon}(\rho):= \sum_{r=1}^d \sigma_r \bra{b_r} \rho \ket{b_r} - \frac{\epsilon}{2}\sum_{r=1}^d \bra{b_r} \rho \ket{b_r}^2
\end{equation}
where $\sigma_r$ is non-negative $\forall r \in \{1,\ldots,d\}$ and $\epsilon$ is strictly positive. The set $\{\sigma_r\}_{r=1}^d$ and $\epsilon$ are chosen according to \cite{AMINI20132683} such that the \ly function $V_{\epsilon}(\rho_k)>0 \, \forall\, \rho_k \in D$ and $\rho_k$ converges to the intended subspace $\ket{\bar{b}_r} \bra{\bar{b}_r}$ with probability 1. By \cite{AMINI20132683} and \cite{6160433}, $\bra{b_r} \rho \ket{b_r}$ is an open-loop martingale given the density operator evolution (\ref{eq:dens_up}) (see Appendix~\ref{ap:martingale} for proof). $V_{\epsilon}$ is a concave function of the open-loop martingales $\bra{b_r} \rho \ket{b_r}$ and therefore is an open-loop ($u_k = 0$) supermartingale given the process (\ref{eq:dens_up}). 
\begin{equation*}
    \EX[V_{\epsilon}(\rho_{k+T}) |\, \rho_k, u_k = 0] - V_{\epsilon}(\rho_k) \leq 0 
\end{equation*}
The following feedback control mechanism is used
\begin{equation*}
    u_k := \underset{u \in [-1,1]}{\text{argmin}}\, \EX[V_{\epsilon}(\rho_{k+T}) | \rho_k, u] 
\end{equation*}
to get $\EX[V_{\epsilon}(\rho_{k+T}) | \rho_k, u_k] \leq \EX[V_{\epsilon}(\rho_{k+T}) | \rho_k, u = 0]$.
Here the expectation is taken with respect to $\beta_k$,
\begin{equation*}
\begin{split}
     \EX[V_{\epsilon}(\rho_{k+T}) | \rho_k, u]  &=
    \EX[V_{\epsilon}(\mathbb{M}_{\beta_k}^{u} \rho_k)]
=  \int_{\Omega} V_{\epsilon} (\mathbb{M}_{\beta_k(\omega)}^u \rho_k)d\omega 
\end{split}
\end{equation*}
where $\Omega$ is the sample space under which the process is induced.
Define 
$
    \tilde{Q}(\rho_k) := \EX[V_{\epsilon}(\rho_{k+T}) | \rho_k, u_k] - V_{\epsilon}(\rho_k) \leq  \EX[V_{\epsilon}(\rho_{k+T}) | \rho_k, u_k = 0] - V_{\epsilon}(\rho_k) \leq 0
$.
$V_{\epsilon}(\rho)$ is a continuous function, so using \cite[Chapter 8]{kushner1971introduction}, we have the $T$-step control sequence $\{\rho_{k+iT}\}_{i \in \mathbb{N}}$ that converges to the set $\textrm{\limset}:= \{\rho : \tilde{Q}(\rho) = 0\}$ with probability one.\\ We will first show that the set \limset \ is restricted to our desired state $\{\ket{\bar{b}_r} \bra{\bar{b}_r}\}$, then that the entire sequence $\{\rho_k\}_{k \in \mathbb{N}}$ converges to this set. The former is proved in Lemma 2 of \cite{AMINI20132683}; For any target subspace $\ket{\bar{b}_r} \bra{\bar{b}_r}$, the set $\{\sigma_r\}_{r=1}^d$ can be chosen in such a way that \limset $ = \ket{\bar{b}_r} \bra{\bar{b}_r}$. The idea is the following:
A state $\rho_k$ is in the limit set \limset \ iff for all $u \in [-1,1],$
\begin{equation}
\label{supm}
    \EX[V_{\epsilon}(\rho_{k+T}) | \rho_k, u] - V_{\epsilon}(\rho_k) \geq 0
\end{equation}
Also, since $V_{\epsilon}$ is an open-loop supermartingale, $\forall \rho_k \in D$: 
\begin{equation}
\label{subm}
    \EX[V_{\epsilon}(\rho_{k+T}) | \rho_k, u = 0] - V_{\epsilon}(\rho_k) \leq 0
\end{equation}
By Lemma 2 of \cite{AMINI20132683}, given any $\bar{r} \in \{1,\dots, d\}$ and fixed $\epsilon > 0$, the weights $\sigma_1, \dots, \sigma_d$ can be chosen so that $V_{\epsilon}$ satisfies the following property: $\forall r \in \{1,\dots,d\}, u \mapsto \EX[V_{\epsilon}(\rho_{k+T}) \ | \rho_k = \ket{b_r}\bra{b_r}, u_k = u]$ has a strict local minimum at $u=0$ for $r = \bar{r}$ and a strict local maximum at $u=0$ for $r \neq \bar{r}$. This combined with (\ref{subm}) ensures that for any $r \neq \bar{r}, \exists\, u \in [-1,1]$ such that $\EX[V_{\epsilon}(\rho_{k+T}) \ |\, \rho_k = \ket{b_r}\bra{b_r}, u_k = u] - V_{\epsilon}(\ket{b_r}\bra{b_r}) < 0$. Therefore, by (\ref{supm}), we have that $\ket{b_r}\bra{b_r}$ is in the limit set $l_{\infty}$ if and only if $r = \bar{r}$. 

We now show that $\PR[\lim_{k \to \infty}\rho_k = \ket{\bar{b_r}}\bra{\bar{b_r}}] = 1$, i.e. the entire sequence converges to the target state with probability one. This utilizes Theorem \ref{thm:extension} which is developed in Section \ref{sec4}.
We have shown that 
$
\PR[\,\lim_{i \to \infty} \rho_{k + iT} = \ket{\bar{b}_r}\bra{\bar{b}_r}] = \\
\PR[\,\exists\,K \in \mathbb{N} : \rho_{k + iT} = \ket{\bar{b}_r}\bra{\bar{b}_r} \ \forall i \geq K]=1    
$.
The set $\{\sigma_r\}$ was chosen such that $\EX[V_{\epsilon}(\rho_{k+T}) \ | \rho_k = \ket{b_r}\bra{b_r}, u_k = u]$ has a strict local minimum at $u=0$ for $r = \bar{r}$ so
$
    u_{k+iT}= \underset{u \in [-1,1]}{\text{argmin}}\, \{\EX[V_{\epsilon}(\rho_{k+(i+1)T}) | \rho_{k+iT}, u]\} = 0 \, \forall\, i \geq K
$.
So, for $q \in \{1,\dots,T\}$, 
$
   \EX[V_{\epsilon}(\rho_{k+(i+q+1)T}) | \rho_{k+(i+q)T}, u=0] = \EX[V_{\epsilon}(\mathbb{M}_{\beta_k}^{0}\rho_{k+(i+q)T})]\leq V_{\epsilon}(\rho_{k+(i+q)T})
$
since $V_{\epsilon}$ is an open-loop supermartingale. We now know $\exists\, K \in \mathbb{N}$ such that $\EX[V_{\epsilon}(\rho_{k+T}) | \rho_k, u_k] \leq V_{\epsilon}(\rho_k) \ \forall\, k \geq K.$ Since there exists a unique mapping from elements of $D$ to elements of $\reals^{2d^2}$, we apply Theorem \ref{thm:extension} to prove $\PR[\lim_{k \to \infty} \rho_k = \ket{\bar{b}_r}\bra{\bar{b}_r}] = 1.$

Since $\ \forall a_j \in \CA, \exists$ a set $\{\ket{b_{j_1}},\dots,\ket{b_{j_n}}\} = a_j \otimes \CX$, the convergence to any $\ket{\bar{b}_r}\bra{\bar{b}_r}, \ket{\bar{b}_r} \in a_j \otimes \CX$, implies the convergence to $a_j \in \CA.$
\end{proof}

To summarize,  we showed that for the discrete time psychological state evolution of (\ref{M_update}), there exists a control policy which allows the machine to guide the human decisions such that a target decision is made asymptotically, almost surely.

\section{Finite Step Stochastic \ly Stability}
\label{sec4}
The purpose of this section is two-fold. First,
we generalize  Theorem 1 of Qin et. al. \cite{qin2019lyapunov} to the case when the finite step size can be a random variable. Our main result below is  Theorem \ref{thm:extension}. Recall that we used this result  in Section \ref{sec3} to prove stability of the human-decision system. Second, Theorem \ref{thm:rand} below is of  independent interest.

Consider the discrete time stochastic system described by 
\begin{equation}
\label{eq1}
    x_{k+1} = f(x_k, y_{k+1}), \ \ \ k= 0,1,2,\ldots
\end{equation}
Here $x_k \in \reals^n$, and $\{y_k : k =0,1,2,\ldots\}$ is a $\reals^d$ valued stochastic process on the probability space $(\Omega, \mathcal{F}, \PR)$.
Consider the filtration (increasing sequence of $\sigma$-fields)  defined by $\mathcal{F}_0 = \{\emptyset, \Omega\}, \ \mathcal{F}_k = \sigma(y_1, \dots, y_k) \textrm{ for } k \geq 1$. We choose $x_0 \in \reals^n$ as a  constant with probability one. Thus $\{x_n\}$ is an $\reals^n$-valued stochastic process adapted to $\mathcal{F}_k$. \\

\begin{theorem}
\label{thm:extension}
For the discrete-time stochastic system (\ref{eq1}), let $V: \reals^n \rightarrow \reals$ be a continuous non-negative and radially unbounded function. Suppose we have the condition:\\
(a) There exists a random variable $T: \Omega \rightarrow \mathbb{N}$ such that for any $k$, $\EX[V(x_{k+T}) | \mathcal{F}_k] - V(x_k) \leq -\varphi(x_k)$, where $\varphi: \reals^n \rightarrow \reals$ is continuous and satisfies $\varphi(x) \geq 0$ for any $x \in \reals^n$.
Then for any initial condition $x_0 \in \reals^n$, $x_k$ converges to $\mathcal{D}_1 := \{x \in \reals^n : \varphi(x) = 0\}$ with probability one. 

This Theorem follows from Theorems \ref{thm:qin} and \ref{thm:rand}; proofs are given for both of these. \\
\end{theorem}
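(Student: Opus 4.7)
The plan is to reduce Theorem \ref{thm:extension} to the combination of the deterministic finite-step Lyapunov result (Theorem \ref{thm:qin}) and its random-$T$ extension (Theorem \ref{thm:rand}). First I would verify that $\{x_k\}$ is adapted to $\{\mathcal{F}_k\}$: since $x_0$ is constant and $x_{k+1}=f(x_k,y_{k+1})$, an induction on $k$ gives $x_k\in\mathcal{F}_k$, so the conditional drift $\EX[V(x_{k+T})\mid\mathcal{F}_k]$ in hypothesis (a) is well-defined. The assumed continuity, non-negativity, and radial unboundedness of $V$, together with continuity and non-negativity of $\varphi$, match the input hypotheses of Theorem \ref{thm:rand}.

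The main step would be to push the random-$T$ drift inequality onto a sub-sampled chain, on which the analysis reduces to a standard one-step stochastic Lyapunov argument. Concretely, I would define stopping times $\tau_0=0$ and $\tau_{j+1}=\tau_j+T^{(j+1)}$, where $T^{(1)},T^{(2)},\ldots$ are i.i.d. copies of $T$ independent of $\mathcal{F}_{\tau_j}$, and apply hypothesis (a) at $k=\tau_j$ to obtain the one-step drift $\EX[V(x_{\tau_{j+1}})\mid\mathcal{F}_{\tau_j}]-V(x_{\tau_j})\leq -\varphi(x_{\tau_j})$. A Robbins--Siegmund / Kushner-style argument then gives almost sure convergence of $V(x_{\tau_j})$ to a finite limit and summability $\sum_j\varphi(x_{\tau_j})<\infty$, so $\varphi(x_{\tau_j})\to 0$ almost surely. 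Radial unboundedness of $V$ together with the supermartingale convergence confines $\{x_k\}$ to a compact set almost surely; continuity of $\varphi$ then forces every limit point of the sub-sampled chain to lie in $\mathcal{D}_1$.

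The main obstacle I expect is upgrading this subsequential statement to convergence of the entire sequence $\{x_k\}$ to $\mathcal{D}_1$. Because the gaps $\tau_{j+1}-\tau_j$ are random and unbounded, the trajectory between sampling times need not sit near $\mathcal{D}_1$ a priori. To close this gap I would condition on realizations of $T$: for each $t$ in the support of $T$, conditional on $T=t$ the drift condition reduces to a deterministic $t$-step Lyapunov inequality, so Theorem \ref{thm:qin} yields convergence of the $t$-step sub-chain $\{x_{jt}\}_{j\in\mathbb{N}}$ to $\mathcal{D}_1$ almost surely on $\{T=t\}$. Combining across all $t$ via the law of total probability, and invoking the LaSalle-type invariance argument packaged in Theorem \ref{thm:rand}, then gives almost sure convergence of the full sequence $d(x_k,\mathcal{D}_1)\to 0$, completing the proof.
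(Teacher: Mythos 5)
Your closing argument---condition on $T=t$ so that hypothesis (a) becomes a deterministic $t$-step drift inequality, apply the fixed-$T$ result, and recombine over $t$ by total probability---is exactly how the paper proves Theorem \ref{thm:rand}, and it is the intended route to Theorem \ref{thm:extension}. Two remarks on the surrounding material: the proof of Theorem \ref{thm:qin} (via the $T$ interleaved subsequences and the exhaustion argument of Appendix \ref{pf:subseq_conv}) already delivers convergence of the \emph{whole} sequence, not merely the sub-chain $\{x_{jt}\}$, so there is no separate ``LaSalle-type invariance'' step packaged in Theorem \ref{thm:rand}; and your intermediate renewal construction (stopping times $\tau_j$ built from i.i.d.\ copies $T^{(j)}$ independent of $\mathcal{F}_{\tau_j}$, with the drift applied at $k=\tau_j$) presumes structure that hypothesis (a) does not supply---there is a single random $T$, no i.i.d.\ gap sequence, and (a) is stated only for deterministic $k$. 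Since you do not ultimately rely on that construction, it is a detour rather than an error, but it should be dropped or justified.

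The genuine gap is the passage to probability one. Theorems \ref{thm:qin} and \ref{thm:rand} conclude convergence only with probability at least $1-V(x_0)/\lambda$, and Theorem \ref{thm:qin} carries an extra hypothesis---the one-step supermartingale condition $\EX[V(x_{k+1})\mid\mathcal{F}_k]\leq V(x_k)$ on $Q_\lambda$---which Theorem \ref{thm:extension} does not assume; so you cannot literally quote Theorem \ref{thm:qin} ``conditional on $T=t$'' and read off an almost-sure conclusion, as your last paragraph does. The repair you need to state is that under (a) the drift holds globally with $\varphi\geq 0$ on all of $\reals^n$: then, conditional on $T=\tau$, the $\tau$ interleaved subsequences $\{V(X_i^{(k)})\}$ are nonnegative supermartingales with no stopping at the exit of $Q_\lambda$, the telescoping/Fatou/Borel--Cantelli step gives $\varphi(X_n^{(k)})\to 0$ a.s.\ for every offset $k$, and the exhaustion lemma gives $\varphi(x_n)\to 0$ a.s.; the one-step condition and Kushner's bound $\PR[\sup_k V(x_k)\geq\lambda]\leq V(x_0)/\lambda$ are used only to confine the trajectory to $Q_\lambda$, which is unnecessary in this global case (equivalently, use radial unboundedness to place any $x_0$ in some $Q_\lambda$ and let $\lambda\to\infty$). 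Summing over $\tau$ as in Theorem \ref{thm:rand} then yields probability one, and your own observation that supermartingale convergence plus radial unboundedness bounds the trajectory, so that continuity of $\varphi$ upgrades $\varphi(x_k)\to 0$ to convergence to $\mathcal{D}_1$, completes the argument in line with the paper.
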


\begin{theorem}
\label{thm:qin}
For the discrete-time stochastic system (\ref{eq1}), let $V: \reals^n \rightarrow \reals$ be a continuous non-negative and radially unbounded function. Define the set $Q_{\lambda} = \{x : V(x) < \lambda\}$ for some positive $\lambda$, and assume that:\\
\ \ \ (a) $\EX[V(x_{k+1}) | \mathcal{F}_k] - V(x_k) \leq 0$ for any $k$ such that $x_k \in Q_{\lambda}$ \\
\ \ \ (b) There exists an integer $T \geq 1$, independent of $\omega \in \Omega$, such that for any $k, \EX[V(x_{k+T}) | \mathcal{F}_k] - V(x_k) \leq -\varphi(x_k)$, where $\varphi: \reals^n \rightarrow \reals$ is continuous and satisfies $\varphi(x) \geq 0$ for any $x \in Q_{\lambda}$ \\ 
Then for any initial condition $x_0 \in Q_{\lambda}$, $x_k$ converges to $\mathcal{D}_1 := \{x \in Q_{\lambda} : \varphi(x) = 0\}$ with probability at least $1 - V(x_0)/\lambda$ \\
\end{theorem}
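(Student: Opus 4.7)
The plan is to decompose the argument into two complementary pieces: a confinement estimate that bounds the probability of ever exiting $Q_\lambda$, and, on the event that no exit occurs, a convergence argument run on $T$ interleaved subsequences in parallel. I then stitch the subsequence conclusions back into a statement about the full trajectory using continuity of $\varphi$ and radial unboundedness of $V$.

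First I would introduce the stopping time $\tau := \inf\{k \geq 0 : V(x_k) \geq \lambda\}$ (with $\inf \emptyset = \infty$). By hypothesis (a), the stopped process $\{V(x_{k \wedge \tau})\}_{k \geq 0}$ is a non-negative $\mathcal{F}_k$-supermartingale: on $\{k < \tau\}$ we have $x_k \in Q_\lambda$ so (a) applies, while on $\{k \geq \tau\}$ the process is constant. Applying Doob's maximal inequality to this non-negative supermartingale gives
\[
\PR\!\left[\,\sup_{k \geq 0} V(x_{k \wedge \tau}) \geq \lambda\,\right] \;\leq\; \frac{V(x_0)}{\lambda},
\]
so $\PR[\tau < \infty] \leq V(x_0)/\lambda$, i.e. $\PR[E] \geq 1 - V(x_0)/\lambda$, where $E := \{\tau = \infty\} = \{x_k \in Q_\lambda \text{ for all } k\}$.

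Next, on the event $E$ I would run $T$ parallel subsequence arguments. For each offset $j \in \{0, 1, \ldots, T-1\}$, define $X^{(j)}_k := x_{kT+j}$ adapted to $\mathcal{G}^{(j)}_k := \mathcal{F}_{kT+j}$. Hypothesis (b) yields
\[
\EX\!\left[V\!\bigl(X^{(j)}_{k+1}\bigr)\,\big|\,\mathcal{G}^{(j)}_k\right] \;\leq\; V\!\bigl(X^{(j)}_k\bigr) \;-\; \varphi\!\bigl(X^{(j)}_k\bigr),
\]
and on $E$ we have $\varphi(X^{(j)}_k) \geq 0$ because $X^{(j)}_k \in Q_\lambda$. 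Telescoping expectations on $E$ produces
\[
\sum_{k=0}^{K-1} \EX\!\left[\varphi\!\bigl(X^{(j)}_k\bigr)\,\mathbf{1}_E\right] \;\leq\; V(x_j) \;\leq\; \lambda,
\]
and letting $K \to \infty$ together with monotone convergence gives $\sum_{k \geq 0} \varphi(X^{(j)}_k) < \infty$ almost surely on $E$, whence $\varphi(X^{(j)}_k) \to 0$ a.s.\ on $E$. Taking the union over the $T$ offsets yields $\varphi(x_k) \to 0$ a.s.\ on $E$.

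Finally, radial unboundedness of $V$ combined with $V(x_k) < \lambda$ on $E$ confines the full trajectory to a compact sublevel set, so $\{x_k\}$ has accumulation points; continuity of $\varphi$ then forces every such accumulation point to lie in $\mathcal{D}_1 = \{x \in Q_\lambda : \varphi(x) = 0\}$, which is the asserted convergence. The main technical obstacle is the stopping-time bookkeeping: one has to make sure hypothesis (a) is only invoked where $x_k \in Q_\lambda$ and that the non-negativity of $\varphi$ is only used on that same region, so that the telescoping step in the second stage is legitimate on $E$ without requiring any sign or growth control on $\varphi$ outside $Q_\lambda$. Once this is set up, no further assumptions on $f$ or $\{y_k\}$ are needed.
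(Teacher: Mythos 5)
Your proposal is correct and follows essentially the same route as the paper's proof: a supermartingale confinement bound giving probability at least $1 - V(x_0)/\lambda$ of remaining in $Q_\lambda$ (you derive it via Doob's maximal inequality where the paper cites Kushner), followed by the drift condition (b) applied to the $T$ interleaved subsequences, telescoping to get summability of $\varphi$ and hence $\varphi(x_k)\to 0$ on the confinement event. The only point to make fully precise is the displayed telescoping bound with $\mathbf{1}_E$, which as you note requires running the estimate on the process stopped at first exit from $Q_\lambda$ (so that only non-negative $\varphi$-terms appear), exactly as in the paper.
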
 
\begin{proof}We have 
\begin{equation}
\label{eq2}
    \EX[V(x_{k+T})|\mathcal{F}_k] - V(x_k) \leq -\varphi(x_k) \leq 0 ,  \forall x_k \in Q_{\lambda}
\end{equation} where $\varphi(x)$ is continuous. Now, Kushner \cite[p.196]{kushner1971introduction} has shown that if we start with $x_0 \in Q_{\lambda}$ then the probability of staying in $Q_{\lambda}$ during the entire resultant trajectory is at least $1 - V(x_0)/\lambda$, i.e. 
\begin{equation}
\label{eq3}
    \PR[\sup_{k \in \mathbb{N}}V(x_k) \geq \lambda] \leq V(x_0)/\lambda
\end{equation}
Next construct  $T$ subsequences of $\{X_k\}$ as follows: $\{X_i^{(0)}\} = \{X_0,X_T,\dots\}, \{X_i^{(1)}\} = \{X_1,X_{T+1},\dots\},\dots, \{X_i^{(T-1)}\} = \{X_{T-1},X_{2T-1},\dots\}.$ \\
Suppose $\varphi(x) \geq 0 \ \forall x:$ Then for all $k \in K := \{0,\dots,T-1\}$, $\{V(X_i^{(k)})\}$ is a non-negative supermartingale process by (\ref{eq2}), and thus by Doob's convergence theorem converges to a limit with probability 1. From (\ref{eq2}) we have for all $k \in K$ and $n \in \mathbb{N}$
$\sum_{l=1}^{n}\EX(V(X_l^{(k)}))-\EX(V(X_{l-1}^{(k)}))\leq-\EX(\sum_{l=0}^{n-1}\varphi(X_l^{(k)}))
$
and $0 \leq \EX(V(X_n^{(k)})) \leq \EX(V(X_0^{(k)})) - \EX(\sum_{l=0}^{n-1}\varphi(X_l^{(k)}))$.
We use Fatou's Lemma to obtain $\EX(\sum_{l=0}^{\infty}\varphi(X_l^{(k)})) < \infty$ and by Borel-Cantelli we have $\PR[\lim_{n \to \infty}\varphi(X_n^{(k)}) = 0] = 1\, \forall\, k \in K
$.
Now suppose $\varphi(x) \geq 0$ only for $x \in Q_{\lambda}.$ Stop $\{X_n^{(k)}\}$ on first leaving $Q_{\lambda}$. Then for $x \notin Q_{\lambda}$, $\varphi(x) = 0$ for this stopped set. This stopped process is a supermartingale and the proof is the same as above.  \\
It is now apparent that $\lim_{n \to \infty}\varphi(X_n^{(k)}(\omega)) = 0 \ \forall k \in K$ and $\omega \in \bar{\Omega} = \{\omega \in \Omega : x_n(\omega) \in Q_{\lambda} \ \forall n \in \mathbb{N}\}$, so we have $\PR[\lim_{n \to \infty}\varphi(X_n(\omega)) = 0] \geq 1- V(x_0) / \lambda$ by the  analysis in  Appendix \ref{pf:subseq_conv} and (\ref{eq3}). \\
\end{proof}

\begin{theorem}
\label{thm:rand}
Theorem (\ref{thm:qin}) holds when $T$ is an integer-valued random variable $T: \Omega \rightarrow \mathbb{N}$, where $\Omega$ is the underlying sample space  \\
\end{theorem}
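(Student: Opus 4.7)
The plan is to bootstrap from the deterministic-step result, Theorem \ref{thm:qin}, by conditioning on the realized value of $T$. Since $T$ takes values in the countable set $\mathbb{N}$, the sample space partitions as $\Omega = \bigsqcup_{t \geq 1} \Omega_t$ with $\Omega_t = \{\omega \in \Omega : T(\omega) = t\}$, and on each atom $\Omega_t$ of positive probability the step size is effectively a deterministic integer, so Theorem \ref{thm:qin} applies there.

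First I would assume, as is natural in the feedback model of Section \ref{sec3} where $T$ has a prescribed PMF $\pi_T(\cdot)$ drawn independently of the noise, that $T$ is independent of the process $\{y_k\}$ generating the filtration $\{\mathcal{F}_k\}$. I would then form the conditional probability measure $\PR_t(\cdot) := \PR(\cdot \mid T = t)$ for each $t$ with $\pi_T(t) > 0$. Independence gives $\EX_{\PR_t}[V(x_{k+t}) \mid \mathcal{F}_k] = \EX[V(x_{k+t}) \mid \mathcal{F}_k, T=t] = \EX[V(x_{k+t}) \mid \mathcal{F}_k]$, so hypothesis (b) of Theorem \ref{thm:qin} transfers intact to $(\Omega_t, \PR_t)$ with the deterministic integer step $t$; hypothesis (a) carries over verbatim since it does not involve $T$. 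Theorem \ref{thm:qin} then yields, under $\PR_t$, $\PR_t[\lim_{k \to \infty} \varphi(x_k) = 0] \geq 1 - V(x_0)/\lambda$ for any $x_0 \in Q_\lambda$.

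Combining over $t$ via the law of total probability,
\begin{align*}
\PR\bigl[\lim_{k \to \infty} \varphi(x_k) = 0\bigr]
&= \sum_{t \geq 1} \pi_T(t)\, \PR_t\bigl[\lim_{k \to \infty} \varphi(x_k) = 0\bigr] \\
&\geq \sum_{t \geq 1} \pi_T(t)\bigl(1 - V(x_0)/\lambda\bigr) \\
&= 1 - V(x_0)/\lambda,
\end{align*}
which matches the conclusion of Theorem \ref{thm:qin}.

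The main obstacle is the relationship between $T$ and the filtration $\{\mathcal{F}_k\}$: if $T$ depended in a non-trivial way on the sample path of $\{y_k\}$, then conditioning on $\{T=t\}$ would bias the conditional law of $\{x_k\}$ and the transfer of hypothesis (b) could fail. I would handle this by enlarging the filtration to $\mathcal{G}_k := \sigma(\mathcal{F}_k \cup \sigma(T))$, so that $T$ is $\mathcal{G}_0$-measurable, and re-executing the supermartingale-subsequence argument of Theorem \ref{thm:qin} with respect to $\{\mathcal{G}_k\}$ pathwise on each $\Omega_t$. A secondary subtlety is that the number of interleaved subsequences $\{X_i^{(0)}\}, \dots, \{X_i^{(t-1)}\}$ built in the proof of Theorem \ref{thm:qin} now depends on $\omega$, but since the Doob convergence, Fatou and Borel--Cantelli steps apply subsequence-by-subsequence, this causes no difficulty once one works on a fixed $\Omega_t$ at a time.
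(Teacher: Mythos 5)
Your proposal is correct and follows essentially the same route as the paper: condition on the event $\{T=\tau\}$, run the fixed-step supermartingale/Fatou/Borel--Cantelli argument of Theorem \ref{thm:qin} under that conditioning to get $\PR[\lim_{n\to\infty}\varphi(x_n)=0 \mid T=\tau] \geq 1 - V(x_0)/\lambda$, and then recombine over $\tau$ by the law of total probability. The only difference is that you make explicit the assumption (independence of $T$ from the driving noise, or equivalently enlarging the filtration so $T$ is known at time zero) needed for the drift hypothesis to survive the conditioning, a point the paper's proof uses implicitly without comment.
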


\begin{proof} 
This follows from the previous proof, with expectations conditioned on $T(\omega)=\tau$, the set of sequences with interval length $\tau \in \mathbb{N}$:
$
    \sum_{l=1}^{n}\EX(V(X_l^{(k)}))-V(X_{l-1}^{(k)})|T=\tau)\leq - \EX(\sum_{l=0}^{n-1}\varphi(X_l^{(k)})|T=\tau)
$. Applying Fatou's Lemma yields $\EX(\sum_{l=0}^{\infty}\varphi(X_l^{(k)})|T=\tau) < \infty$ and so by Borel-Cantelli we have $
    \PR[\lim_{n \to \infty}\varphi(X_n^{(k)}) = 0|T=\tau] = 1 \ \ \forall k \in K
$. The same arguments from the proof of Theorem (\ref{thm:qin}) yield
$
    \PR[\lim_{n \to \infty}\varphi(X_n(\omega)) = 0|T=\tau] \geq 1- V(x_0) / \lambda
$. So,
$
    \PR[\lim_{n \to \infty}\varphi(X_n(\omega)) = 0|T=\tau]\mathbb P(T=\tau) \geq (1- V(x_0) / \lambda)\mathbb P(T=\tau)
    \implies \PR[\lim_{n \to \infty}\varphi(X_n(\omega)) = 0]  = \sum_{\tau \in \mathbb{N}}\PR[\lim_{n \to \infty}\varphi(X_n(\omega)) = 0|T=\tau]\mathbb P(T=\tau) \geq \sum_{\tau \in \mathbb{N}}(1- V(x_0) / \lambda)\mathbb P(T=\tau) = 1- V(x_0) / \lambda
$
\end{proof}

To summarize, this section provided a generalization of the finite-step \ly function result of Qin. et. al. \cite{qin2019lyapunov}. We applied this to show stability of the Lindbladian dynamics to prove almost sure convergence of the density operator (psychological state), but the generalization is of independent interest. 

\section{Conclusion and Extensions}
\label{sec5}
    Our main result, Theorem (\ref{thm:convg}), showed that for a human-machine decision system modeled as a controlled quantum decision system, there exists an optimal control policy under which the human's action choice can be guided to an arbitrary target action with probability one. This can be useful in tense or stressful decision situations when the human is subject to cognitive bias and irrational preferences; The machine can act as a rational Bayesian expected utility maximizer and control the human's preference to a Bayesian optimal choice. In proving this we have utilized a novel random finite-step Lyapunov function result, which we present and prove in Sec. (\ref{sec4}), and which stands as an independent result. There are several simplifying assumptions we have made, which warrant further investigating. For one, we have assumed that the machine knows the $(\alpha,\lambda,\phi)$ parametrization of the human's Lindbladian mental operator. It would be interesting to see what the analysis yields when these are only estimates with some distribution. 
   It is worthwhile  extending our results to   more general human-machine decision systems.

\section{Appendix}

\subsection{Convergence of constructed subsequences implies convergence of sequence}
\label{pf:subseq_conv}
We have $\lim_{n \to \infty}\varphi(X_n^{(k)}(\omega)) = 0 \ \forall \omega \in  \bar{\Omega}$. Let $\omega \in \bar{\Omega}$ and $\varphi_n^{(k)}$ denote $\varphi(X_n^{(k)}(\omega))$ and $\varphi_n$ denote $\varphi(X_n(\omega))$. \\ 
We have: $\forall\, \epsilon > 0 \ \exists\, N_k$ such that $\varphi_n^{(k)} < \epsilon \ \forall n > N_k$. Take $N^{*} = \max_{k \in \{0,\dots,T-1\}}N_k$. Suppose $\lim_{n \to \infty}\varphi_n \neq 0$: Then $\exists\, \epsilon > 0$ such that $\forall \ N \in \mathbb{N} \ \exists \ n_0 > N$ with $\varphi_{n_0} > \epsilon.$ Since the subsequences are exhaustive, i.e. for any $\varphi_n \ \exists\, k, m,$ such that $\varphi_n = \varphi_m^{(k)}$, we know that for any $\epsilon > 0$, $\varphi_n < \epsilon \ \forall\, n > N^{*}$ so such a $n_0$ does not exists for $N^{*}$ and thus by contradiction we have $\lim_{n \to \infty}\varphi_n = 0.$ \\

\subsection{Lindbladian Psychological Model Construction} 
Let psychological state space $\CS:=\CH({\{\ket{\mathcal{E}_l,\CA_j}\}}_{\substack{{1\leq l\leq n}\\{1\leq j\leq m}}})$ be a Hilbert space. Here $\CH$ is defined in Notation (2). Each $\ket{\mathcal{E}_l}$ is an $n$-dimensional complex vector indexed by the state $l$, and $\ket{\mathcal{A}_j}$ is an $m$-dimensional complex vector indexed by the action $j$.
\label{constr_lind}
The evolution of the density operator is given by $
\frac{d\rho}{dt} = \mathcal{L}_{(\alpha,\lambda,\phi)} \, \rho
$ where
\begin{equation}
\label{Lindblad}
\begin{split}
     &\mathcal{L}_{(\alpha,\lambda,\phi)} \, \rho =-i(1-\alpha)[H,\rho]\\&+\alpha \sum_{m,n}\gamma_{(m,n)}\left(L_{(m,n)}\,\rho\, L_{(m,n)}^{\dagger}-\frac{1}{2}\{L_{(m,n)}^{\dagger}L_{(m,n)},\rho\}\right)    
\end{split}
\end{equation}
Here $[A,B] = AB - BA$, $\{A,B\} = AB + BA$, $A^*$ is complex conjugate of $A$,
${H}=\diag(\mathbf{1}_m,\cdots,\mathbf{1}_m)_{mn\times\,mn}$ with $\mathbf{1}_m$ an $m \times m$ matrix of ones. $L_{(m,n)} = \ket{m}\bra{n}$ is the jump operator, which represents the jump from $m^{th}$ cognitive state to $n^{th}$ cognitive state.
$\gamma_{(m,n)}:=[C(\lambda,\phi)]_{m,n}=[(1-\phi)\Pi^{T}(\lambda) + \phi K^{T}]_{m,n}$.  For utility function $u: \CA \times \CX \rightarrow \reals$,
$p(a_j|\mathcal{E}_l)=\frac{u(a_j|\mathcal{E}_l)^{\lambda}}{\sum_{j=l}^m u(a_j|\mathcal{E}_l)^{\lambda}}$
we define
$\mathbb{P}(\mathcal{E}_l):=\begin{bmatrix}
    p(a_1|\mathcal{E}_l) & p(a_2|\mathcal{E}_l)&\cdots& p(a_{m}|\mathcal{E}_l) \\
\end{bmatrix}\otimes\mathbf{1}_{n\times1}$ and 
$\Pi(\lambda) = \diag(
        P(\mathcal{E}_1),\cdots,P(\mathcal{E}_{n}))$
where $\mathbf{1}_{n\times1}$ is a vector with all 1's and, $A \otimes B$ is the kronecker product of $A$ and $B$. Define
$\eta_k(s) = p(s | u_k, y_k)$
given the noisy observation $y_k$ and input signal $u_k$, with $s \in \CX$.
We define
$K:={\begin{bmatrix}
        \eta_k(\mathcal{E}_1) & \eta_k(\mathcal{E}_2)&\cdots&\eta_k(\mathcal{E}_{n}) 
      \end{bmatrix} }\otimes\mathbf{1}_{m\times1}\otimes \mathbb{I}_{m\times m}$.
 $\alpha \in [0,1]$ represents the amount of quantum behavior in the evolution of the density operator.
    $\lambda \in [0,\infty)$ can be thought of as the agent's ability to discriminate between the profitability of different actions.
    $\phi \in (0,1)$ represents the relevance of discrimination between underlying states $\{\mathcal{E}_1, \dots, \mathcal{E}_{n}\}$.

 When we consider the closed-loop feedback control mechanism, the scalar control input $u_k$ directly parametrizes the structure of the cognitive matrix, so that we can have $C_{u_k}(\lambda,\phi)$. We leave this as a parameter and do not define the specific effect of $u_k$ on $C_{u_k}(\lambda,\phi)$, as long as $u_k$ satisfies the constraints $(i) - (iv)$ of Section~\ref{ap:contr}.
 
 \subsection{Discrete time representation of Lindblad Equation}
 \label{DT_lind}
We time  discretize the Lindbladian dynamics  by representing it using  Kraus operators to get (\ref{M_update}), see  \cite{Pearle_2012}.
We discretize the evolution in equation (\ref{Lindblad}) in multiples of $\Delta t$ in order to incorporate $T$ time steps and rewrite it as
$
    \rho_{k+T}=\sum_{\mu}M_{\mu_k,T}^{u_k}\,\rho_{k}\,M_{\mu_k,T}^{u_k \dagger}
$
where $M_{0,T}^{u_k}=\left[\mathbb{I}-T\Delta t(iH+\frac{1}{2}\sum_{m,n}\gamma_{m,n}^{u_k}L_{(m,n)}^{\dagger}L_{(m,n)})\right]$
and $M_{\mu\neq0,T}^{u_k}=\sqrt{T\Delta t\gamma_{m,n}^{u_k}}L_{(m,n)}$, $T\in\BN$. 
We then perform a measurement to get $\rho_{k+T} = \frac{P_{\nu_k}(\sum_{\mu_k}M_{\mu_k,T}^{u_k}\,\rho_{k}\,M_{\mu_k,T}^{u_k \dagger})P_{\nu_k}^{\dagger}}{\Tr(P_{\nu_k}(\sum_{\mu_k}M_{\mu_k,T}^{u_k}\,\rho_{k}\,M_{\mu_k,T}^{u_k \dagger}) P_{\nu_k}^{\dagger})}
$
and thus obtain (\ref{M_update}).
\subsection{Control Input Constraints}
\label{ap:contr}
($i$): For each $u_k, \sum_{\mu}M_{\mu}^{u_k \dagger}\, M_{\mu}^{u_k} = \id$ for general quantum measurements $\mu$.\\
($ii$): For $u_k = 0$, all $M_{\mu}^0$ are diagonal in the same orthonormal basis $\{\ket{n} |\, n \in \{1,\dots,d\}\}: M_{\mu}^0 = \sum_{n=1}^d c_{\mu,n}\ket{n}\bra{n}, c_{\mu,n} \in \complex$ \\
($iii$): For all $n_1 \neq n_2$ in $\{1,\dots,d\}$, $\,\exists\, \mu \in \{1,\dots,d\}$ such that $|c_{\mu,n_1}|^2 \neq |c_{\mu,n_2}|^2.$ \\
($iv$): All $M_{\mu}^{u_k}$ are $C^2$ functions of $u_k$

\subsection{Martingale Proof}
\label{ap:martingale}
We prove  that under the evolution of (\ref{M_update}), $\bra{b_r} \rho_k \ket{b_r}$ is a $T$-step martingale. Denote
$
    \mathbb{M}_{\mu_k}\rho_k$ $ = \mathbb{M}_{\mu_k, T=1}^0 \rho_k=\frac{M_{\mu_k, 1}^0\, \rho_k\, M_{\mu_k, 1}^{0\dagger}}{\Tr(M_{\mu_k, 1}^0\, \rho_k\, M_{\mu_k, 1}^{0\dagger})} $.
Since $\bra{b_r} \rho_k \ket{b_r} $ $= \Tr(\ket{b_r}\bra{b_r} \rho_k)$, 
$\EX[\Tr(\ket{b_r}\bra{b_r}\, \rho_{k+1}) | \rho_k,u_k]$ $=\sum_{\mu_k = 1}^m  \Tr(M_{\mu_k}\, \rho_k \, M_{\mu_k}^{\dagger}) \Tr(\ket{b_r}\bra{b_r}\, \mathbb{M}_{\mu_k}\, \rho_k)$ $=\sum_{\mu_k = 1}^m \Tr(\ket{b_r}\bra{b_r}\, M_{\mu_k}\, \rho_k\, M_{\mu_k}^{\dagger})$ $=\Tr(\sum_{\mu_k = 1}^m  M_{\mu_k} M_{\mu_k}^{\dagger} \ket{b_r}\bra{b_r} \rho_k)=\Tr(\ket{b_r}\bra{b_r} \rho_k).
$
By induction we have $\EX[\bra{b_r} \rho_{k+T} \ket{b_r}] = \bra{b_r} \rho_k \ket{b_r}$


\bibliographystyle{acm}
\bibliography{Bibliography}

\begin{thebibliography}{10}

\bibitem{aerts2011quantum}
{\sc Aerts, D., Broekaert, J., Czachor, M., and D’Hooghe, B.}
\newblock A quantum-conceptual explanation of violations of expected utility in
  economics.
\newblock In {\em International Symposium on Quantum Interaction\/} (2011),
  Springer, pp.~192--198.

\bibitem{6160433}
{\sc Amini, H., Rouchon, P., and Mirrahimi, M.}
\newblock Design of strict control-{L}yapunov functions for quantum systems
  with qnd measurements.
\newblock In {\em 2011 50th IEEE Conference on Decision and Control and
  European Control Conference\/} (2011), pp.~8193--8198.

\bibitem{AMINI20132683}
{\sc Amini, H., Somaraju, R.~A., Dotsenko, I., Sayrin, C., Mirrahimi, M., and
  Rouchon, P.}
\newblock Feedback stabilization of discrete-time quantum systems subject to
  non-demolition measurements with imperfections and delays.
\newblock {\em Automatica 49}, 9 (2013), 2683--2692.

\bibitem{asano2012quantum}
{\sc Asano, M., Basieva, I., Khrennikov, A., Ohya, M., and Tanaka, Y.}
\newblock Quantum-like dynamics of decision-making.
\newblock {\em Physica A: Statistical Mechanics and its Applications 391}, 5
  (2012), 2083--2099.

\bibitem{askarpour2019formal}
{\sc Askarpour, M., Mandrioli, D., Rossi, M., and Vicentini, F.}
\newblock Formal model of human erroneous behavior for safety analysis in
  collaborative robotics.
\newblock {\em Robotics and computer-integrated Manufacturing 57\/} (2019),
  465--476.

\bibitem{belanche2020consumer}
{\sc Belanche, D., Flavi{\'a}n, C., and P{\'e}rez-Rueda, A.}
\newblock Consumer empowerment in interactive advertising and ewom
  consequences: The pitre model.
\newblock {\em Journal of Marketing Communications 26}, 1 (2020), 1--20.

\bibitem{busemeyer2020application}
{\sc Busemeyer, J., Zhang, Q., Balakrishnan, S., and Wang, Z.}
\newblock Application of quantum—{M}arkov open system models to human
  cognition and decision.
\newblock {\em Entropy 22}, 9 (2020), 990.

\bibitem{busemeyer2012quantum}
{\sc Busemeyer, J.~R., and Bruza, P.~D.}
\newblock {\em Quantum models of cognition and decision}.
\newblock Cambridge University Press, 2012.

\bibitem{busemeyer2011quantum}
{\sc Busemeyer, J.~R., Pothos, E.~M., Franco, R., and Trueblood, J.~S.}
\newblock A quantum theoretical explanation for probability judgment errors.
\newblock {\em Psychological review 118}, 2 (2011), 193.

\bibitem{busemeyer2009empirical}
{\sc Busemeyer, J.~R., Wang, Z., and Lambert-Mogiliansky, A.}
\newblock Empirical comparison of {M}arkov and quantum models of decision
  making.
\newblock {\em Journal of Mathematical Psychology 53}, 5 (2009), 423--433.

\bibitem{furnham1986robustness}
{\sc Furnham, A.}
\newblock The robustness of the recency effect: Studies using legal evidence.
\newblock {\em The Journal of General Psychology 113}, 4 (1986), 351--357.

\bibitem{HOEY2010503}
{\sc Hoey, J., Poupart, P., von Bertoldi, A., Craig, T., Boutilier, C., and
  Mihailidis, A.}
\newblock Automated handwashing assistance for persons with dementia using
  video and a partially observable markov decision process.
\newblock {\em Computer Vision and Image Understanding 114}, 5 (2010),
  503--519.
\newblock Special issue on Intelligent Vision Systems.

\bibitem{hogarth1992order}
{\sc Hogarth, R.~M., and Einhorn, H.~J.}
\newblock Order effects in belief updating: The belief-adjustment model.
\newblock {\em Cognitive psychology 24}, 1 (1992), 1--55.

\bibitem{kahneman1982judgment}
{\sc Kahneman, D., Slovic, S.~P., Slovic, P., and Tversky, A.}
\newblock {\em Judgment under uncertainty: Heuristics and biases}.
\newblock Cambridge university press, 1982.

\bibitem{kahneman2013prospect}
{\sc Kahneman, D., and Tversky, A.}
\newblock Prospect theory: An analysis of decision under risk.
\newblock In {\em Handbook of the fundamentals of financial decision making:
  Part I}. World Scientific, 2013, pp.~99--127.

\bibitem{khrennikov2010ubiquitous}
{\sc Khrennikov, A.}
\newblock {\em Ubiquitous quantum structure}.
\newblock Springer, 2010.

\bibitem{khrennikov2009quantum}
{\sc Khrennikov, A.~Y., and Haven, E.}
\newblock Quantum mechanics and violations of the sure-thing principle: The use
  of probability interference and other concepts.
\newblock {\em Journal of Mathematical Psychology 53}, 5 (2009), 378--388.

\bibitem{kushner1971introduction}
{\sc Kushner, H.}
\newblock {\em Introduction to Stochastic Control}.
\newblock Holt, Rinehart and Winston, 1971.

\bibitem{kvam2021temporal}
{\sc Kvam, P.~D., Busemeyer, J.~R., and Pleskac, T.~J.}
\newblock Temporal oscillations in preference strength provide evidence for an
  open system model of constructed preference.
\newblock {\em Scientific reports 11}, 1 (2021), 1--15.

\bibitem{lu2012recommender}
{\sc L{\"u}, L., Medo, M., Yeung, C.~H., Zhang, Y.-C., Zhang, Z.-K., and Zhou,
  T.}
\newblock Recommender systems.
\newblock {\em Physics reports 519}, 1 (2012), 1--49.

\bibitem{martinez2016quantum}
{\sc Mart{\'\i}nez-Mart{\'\i}nez, I., and S{\'a}nchez-Burillo, E.}
\newblock Quantum stochastic walks on networks for decision-making.
\newblock {\em Scientific reports 6}, 1 (2016), 1--13.

\bibitem{morgenstern1953theory}
{\sc Morgenstern, O., and Von~Neumann, J.}
\newblock {\em Theory of games and economic behavior}.
\newblock Princeton university press, 1953.

\bibitem{Pearle_2012}
{\sc Pearle, P.}
\newblock Simple derivation of the lindblad equation.
\newblock {\em European Journal of Physics 33}, 4 (apr 2012), 805--822.

\bibitem{pothos2009quantum}
{\sc Pothos, E.~M., and Busemeyer, J.~R.}
\newblock A quantum probability explanation for violations of
  ‘rational’decision theory.
\newblock {\em Proceedings of the Royal Society B: Biological Sciences 276},
  1665 (2009), 2171--2178.

\bibitem{qin2019lyapunov}
{\sc Qin, Y., Cao, M., and Anderson, B.~D.}
\newblock Lyapunov criterion for stochastic systems and its applications in
  distributed computation.
\newblock {\em IEEE Transactions on Automatic Control 65}, 2 (2019), 546--560.

\bibitem{savage1951theory}
{\sc Savage, L.~J.}
\newblock The theory of statistical decision.
\newblock {\em Journal of the American Statistical association 46}, 253 (1951),
  55--67.

\bibitem{shanteau1970additive}
{\sc Shanteau, J.~C.}
\newblock An additive model for sequential decision making.
\newblock {\em Journal of Experimental Psychology 85}, 2 (1970), 181.

\bibitem{trueblood2011quantum}
{\sc Trueblood, J.~S., and Busemeyer, J.~R.}
\newblock A quantum probability account of order effects in inference.
\newblock {\em Cognitive science 35}, 8 (2011), 1518--1552.

\bibitem{walker1972order}
{\sc Walker, L., Thibaut, J., and Andreoli, V.}
\newblock Order of presentation at trial.
\newblock {\em The Yale Law Journal 82}, 2 (1972), 216--226.

\bibitem{yukalov2010mathematical}
{\sc Yukalov, V.~I., and Sornette, D.}
\newblock Mathematical structure of quantum decision theory.
\newblock {\em Advances in Complex Systems 13}, 05 (2010), 659--698.

\end{thebibliography}

\end{document}